\newtheorem{theorem}{Theorem}
\newtheorem{lemma}{Lemma}
\newtheorem{claim}{Claim}
\newtheorem{corollary}{Corollary}
\newtheorem{definition}{Definition}
\newcommand{\qed}{\hfill $\Box$ \bigbreak}
\newenvironment{proof}{\noindent{\bf Proof.~}}{\qed}
\def\cI{{\cal I}}
\def\cD{{\cal D}}
\def\cT{{\cal T}}
\def\cM{{\cal M}}
\renewcommand{\sp}{\mbox{\rm sup}}
\newcommand{\dist}{\mbox{\rm dist}}
\newcommand{\dfs}{\mbox{\sc dfs}}
\newcommand{\lqa}{\mbox{\rm lqa}}
\newcommand{\weight}{\mbox{\rm weight}}
\newcommand{\parent}{\mbox{\rm parent}}
\newcommand{\LPO}{\mbox{\sc lpo}}
\newcommand{\D}{\mbox{\sc d}}
\begin{document}

\title{An Optimal Labeling Scheme for Ancestry Queries\thanks{This research is supported in part by the ANR projects ALADDIN and PROSE, and by the INRIA project GANG.}}

\author{
Pierre Fraigniaud\\[1ex]
{\small CNRS and Univ. Paris Diderot}\\{\small\sl pierre.fraigniaud@liafa.jussieu.fr}
\and
Amos Korman\\[1ex]
{\small CNRS and Univ. Paris Diderot}\\{\small\sl amos.korman@liafa.jussieu.fr}
}

\date{}

\maketitle

\begin{abstract}
An {\em ancestry labeling} scheme assigns labels (bit strings) to the nodes of rooted trees
such that ancestry queries between any two nodes in a tree can be answered
merely by looking at their corresponding labels. The quality of an ancestry labeling scheme is measured
by its label {\em size}, that is the maximal number of bits in a label of a tree node.

In addition to its theoretical appeal, the design of efficient ancestry labeling schemes is motivated by applications in web search engines. For this purpose, even small improvements in the label size are important.
In fact, the literature about this topic is interested in the  exact label size rather than just its order of magnitude.
As a result, following the proposal of a simple interval-based ancestry scheme with label size $2\log_2 n$ bits (Kannan et al., STOC~'88), a considerable amount of work was devoted to improve the bound on the size of a label.
The current state of the art upper bound is $\log_2 n + O(\sqrt{\log n})$ bits (Abiteboul et al., SODA~'02) which
is still far from the known $\log_2 n + \Omega(\log\log n)$ bits  lower bound (Alstrup et al., SODA'03).

In this paper we close the gap between the known lower and upper bounds, by constructing an ancestry labeling scheme
with label size $\log_2 n + O(\log\log n)$ bits. In addition to the optimal label size, our scheme assigns the labels in linear time and can
support any ancestry query in constant time.

\end{abstract}

\thispagestyle{empty}
\newpage
\setcounter{page}{1}

\section{Introduction}
\label{section:Introduction}

\subsection{Background}

In this paper we consider the following problem. Given an $n$-node rooted tree $T$,
label the nodes of $T$ in the most compact way such that given any pair of nodes $u$ and $v$,
one can determine whether $u$ is an ancestor of $v$ in $T$ by
merely inspecting the labels of $u$ and $v$.
The main quality measure used to evaluate  such an {\em ancestry labeling scheme} is the label
 {\em size}, that is,
the maximum number of bits stored in a label of a node, taken over all nodes in all possible $n$-node rooted trees.

Among other things, the above elegant problem is not only of fundamental interest but is also useful for  performance enhancement of
XML search engines.  In the context of this application, each indexed document is a tree, and the labels
of all trees are maintained in the main memory\footnote{Details on XML search engines and their relation to ancestry labeling schemes can be found, e.g., \cite{AAKMT01,AKM01,FK09}.}.
Therefore, even small improvements in the label size are important, and, in fact, the literature about this topic is interested in the exact label size rather than just its order of magnitude (e.g., label size $\frac{3}{2}\log n$ bits is considered significantly
better than label size $2\log n$ bits\footnote{All logarithms in this paper are taken
in base 2.}).

Ancestry schemes which are currently being used by actual systems are variants of the following simple interval-based ancestry labeling scheme \cite{KNR92} (see also \cite{SK85}). Given an $n$-node tree $T$, perform a DFS traversal in $T$ starting at the root,
and provide each node $u$ with a
DFS number $\dfs(u)$ in the range  $[0,n-1]$. Then the label of a node $u$ is simply the interval $I(u)= [\dfs(u),\dfs(v)]$, where
 $v$ is the descendant of $u$ with largest DFS number.
An ancestry query then amounts to an interval containment query between the corresponding labels: a node $u$ is an ancestor of a node $v$  if and only if $I(v)\subset I(u)$.
Clearly, the  label size of this scheme is $2\log n$ bits.

An elegant lower bound of $\log n + \Omega(\log\log n)$ bits on the label size is given in ~\cite{ABR05}.
This lower bound holds even for a very restricted family of trees, each composed of equal length simple paths
hanging down from the root.

In the other direction,
 a considerable amount of research has been devoted to improve the upper bound on the label size as much as possible beyond the trivial $2\log n$ bound \cite{AAKMT01,AKM01,ARSODA02,FK09,KMS02,TZ01}. Specifically,  \cite{AKM01} gave a first non-trivial upper bound of $\frac{3}{2}\log n + O(\log\log n)$ bits.
 This was improved the year after to $\log n + O(\sqrt{\log n})$ \cite{ARSODA02}, which is the current best upper bound (that scheme is described in detail in the joint journal publication \cite{AAKMT01}). In addition to its relatively small label size, the scheme in \cite{ARSODA02} also assigns labels in linear time and can answer any ancestry query in constant time. Independently of that work, an ancestry labeling scheme with larger label size of $\log n +O(\log n/\log\log n)$ was given in \cite{TZ01}.

Following the above results, two other works were published, which focused on particular types of trees.
Specifically, an experimental comparison of different ancestry labeling schemes  on XML trees that appear in real life can be found in \cite{KMS02}.
Recently, \cite{FK09} gave an ancestry labeling scheme which is efficient for trees of small depth;
specifically, for $n$-node trees with depth $d$, their scheme uses labels of size $\log n +2\log d +O(1)$.

\subsection{Our results}

In this paper we close the gap between the known lower and upper bounds, by constructing an ancestry labeling scheme
for general rooted $n$-node trees with label size $\log n + O(\log\log n)$.
This solves one of the main open problems in the field of informative labeling schemes.
In addition to the optimal label size,  our scheme assigns the labels to the nodes of a tree in linear time, and  guarantees that any ancestry query can be answered in constant time.

\subsection{Related work}
As explained in \cite{KNR92}, the names of nodes  in traditional graph representations reveal no information
about the graph structure and hence memory is wasted.
Moreover, typical representations are
usually global in nature, i.e.,  in order to derive  useful information,
one must access a global data structure representing the entire
network, even if the sought information is local, pertaining to only
a few nodes.
In contrast, the notion of {\em informative labeling schemes}, introduced in
\cite{KNR92}, involves an informative method for assigning labels to nodes.
Specifically, the assignment is made in a way that allows one to infer
information regarding any two nodes directly from their labels,
without using any additional information sources. Hence in
essence, this method bases the entire representation on the set of
labels alone.
This method was illustrated in \cite{KNR92}, by giving two elegant and simple  labeling schemes
 for  $n$-node trees: one supporting adjacency queries and the other supporting ancestry queries. Both schemes incur $2\log n$ label size.

As mentioned earlier, ancestry labeling schemes were further investigated in
 \cite{AAKMT01,ABR05,AKM01,ARSODA02,FK09,KMS02,TZ01}, and the current state of the art upper and lower bounds
 are $\log n+O(\sqrt{\log n})$ and $\log n +\Omega(\log\log n)$, respectively.
Adjacency labeling schemes on trees were also further investigated in an
attempt to optimize the label size beyond the simple $2\log n$ bound of \cite{KNR92}.
The current state of the art  upper bound \cite{AR02+}  for that problem is $\log
n+O(\log^* n)$.

Labeling schemes were also proposed for other  decision problems on graphs,
including distance
\cite{ABR05,GPPR01,T01},
routing \cite{FG01,K08,TZ01}, flow \cite{KK06,KKKP04}, vertex connectivity \cite{K07,KKKP04}, nearest common ancestor
\cite{AGKR01,Peleg00:lca}, and various other tree functions, such as
center, separation level, and Steiner weight of a given subset of
vertices \cite{Peleg00:lca}.
See \cite{GP01b} for a partial survey on labeling schemes.

\section{Preliminaries}

Let $T$ be a tree rooted at some node $r$ referred as the {\em root} of $T$.
For two nodes $u$ and $v$  in $T$,
we say that $u$ is an {\em ancestor} of $v$ if $u\neq v$ and $u$ is one of the nodes on the shortest path connecting
$v$ and $r$ in $T$.  For every non-root node $u$, let $\parent(u)$ denote the parent of $u$, i.e., the ancestor of $u$ at distance~1 from it. A node $v$ is a {\em descendant} of $u$ if and only if $u$ is an ancestor of $v$.

The {\em depth} of a node $u\in V(T)$
is defined as the distance from $u$ to the root of $T$, i.e., the number of edge traversals from $u$ to the root.
In particular, the depth of the root is 0.
 The {\em size} of $T$, denoted by $|T|$, is
the number of nodes in $T$.
The {\em weight} of a node $u\in V(T)$, denoted by $\weight(u)$,
is defined as 1 plus the number of descendants of $u$, i.e., $\weight(u)$ is the size of the subtree hanging down from $u$.
In particular, the weight of the root is  $\weight(r)=|T|$.
Let $\cT(n)$ denote the family of all rooted trees of size at most $n$.

An {\em ancestry labeling
scheme} $( \cM,\cD )$  for the family of trees $\cT(n)$ is
composed of the following two components:
\begin{enumerate}
\item A {\em marker} algorithm $\cM$ that, given
a tree $T\in \cT(n)$,
assigns labels (i.e., bit strings) to its nodes.
\item A  {\em
decoder} algorithm $\cD$ that, given any two labels $\ell_1$ and $\ell_2$ in the output domain of $\cM$, returns a boolean $\cD(\ell_1,\ell_2)$.
\end{enumerate}

These components must satisfy that if $L(u)$ and $L(v)$ denote the labels assigned by the marker to
two nodes $u$ and $v$ in some rooted tree $T\in T(n)$, then
\[
\cD(L(u),L(v))=1 \iff \mbox{$u$ is an ancestor of $v$ in $T$.}
\]
It is important to note that the decoder $\cD$ is independent of the tree $T$. That is, given the labels of two nodes,
the decoder decides the ancestry relationship between the corresponding nodes without knowing to which  tree in $\cT(n)$  they belong to.

The common complexity measure used to evaluate the quality of an ancestry labeling scheme
$(\cM,\cD)$ is the {\em label size}, that is
the maximum number of bits in a label assigned by the marker
algorithm $\cM$ to any node in any  tree $T\in\cT(n)$.

When considering the {\em query time} of the decoder, we use the RAM model of computation, and assume that the length of a computer
word is $\Omega(\log n)$ bits.
Similarly to \cite{ARSODA02}, our decoder algorithm uses only the basic and fast RAM operations such as  addition, substraction, left/right shifts and less-than comparisons.
Our scheme avoids the sometimes more costly operations such as multiplication, division or non-standard operations which are pre-computed
and stored in a pre-computed table.

\paragraph{Notations.}
For every two nodes $v$ and $w$ in $T$, let $P[v,w]$ denote the shortest path connecting $v$ and $w$ in the tree (including $v$ and $w$),
and let $P[v,w)=P[v,w]\setminus\{w\}$.

For two integers $a\leq b$, let $[a,b]$ denote the set of integers
$\{a,a+1,\cdots, b\}$. We refer to this set as an {\em interval}.
For two intervals $I=[a,b]$ and $I'=[a',b']$, we say that $I\prec I'$ if $b<a'$.
The {\em size} of an interval $I=[a,b]$, denoted by $|I|$, is the number of integers
in $I$, i.e, $|I|=b-a+1$.

\section{Modifying the interval containment test}

Our scheme is inspired by the scheme in \cite{FK09} which was designed for trees of bounded depth.
Given a rooted tree $T$,  the label assigned to each node by the scheme in \cite{FK09}  is a pointer to some interval, and an
ancestry query between any two given nodes is answered by a simple interval containment test between the corresponding intervals.
The underlying idea of that scheme consists in proving that, if $T$ is of small depth, then one can choose the intervals from a small set  of intervals $U$ in which the intervals are
well nested within themselves. A pointer to an interval in $U$ can be encoded using $\log |U|$ bits, and thus, since $U$ is relatively small,
the scheme uses short labels. Unfortunately, this approach is no longer efficient when the tree has long paths. Indeed, in that case, the set $U$ of nested intervals becomes too large.

Informally, enforcing the decoder to be merely an interval containment test imposes a strong constraint on the way the intervals
must be organized in $U$. For arbitrary trees, we could not find a way to bypass this constraint while keeping $U$ small.
Instead, we introduce a decoder which, on the one hand, makes the ancestry test somewhat more complicated than when using the interval containment test (yet the test remains very simple), but, on the other hand, enables to organize and nest the intervals in such a way that labels become very small.
Our new decoder exploits the fact that intervals can be partially ordered not only by
the containment relation, but also  by the relation $\prec$ introduced in the previous section.

Given any node $u$ of some rooted tree $T$, we associate $u$ with an interval $I(u)$, and with
a {\em supervisor} node, denoted by $\sp(u)$, which is either $u$ itself or one of its ancestors. For this purpose, we first mark each node  as either {\em heavy} or {\em light} as follows.
For every non-leaf node $u$ of $T$, let $H(u)$ be the set of children $v$ of $u$ that satisfy
$\weight(v)\geq \weight(w)$ for every child $w$ of $u$. Among the nodes in $H(u)$, select an arbitrary node, and call it  {\em heavy}.
A node which is not heavy  is called {\em light}. (In particular, the root is light).

\begin{figure}[t]
\begin{center}
\includegraphics[width=0.3\linewidth]{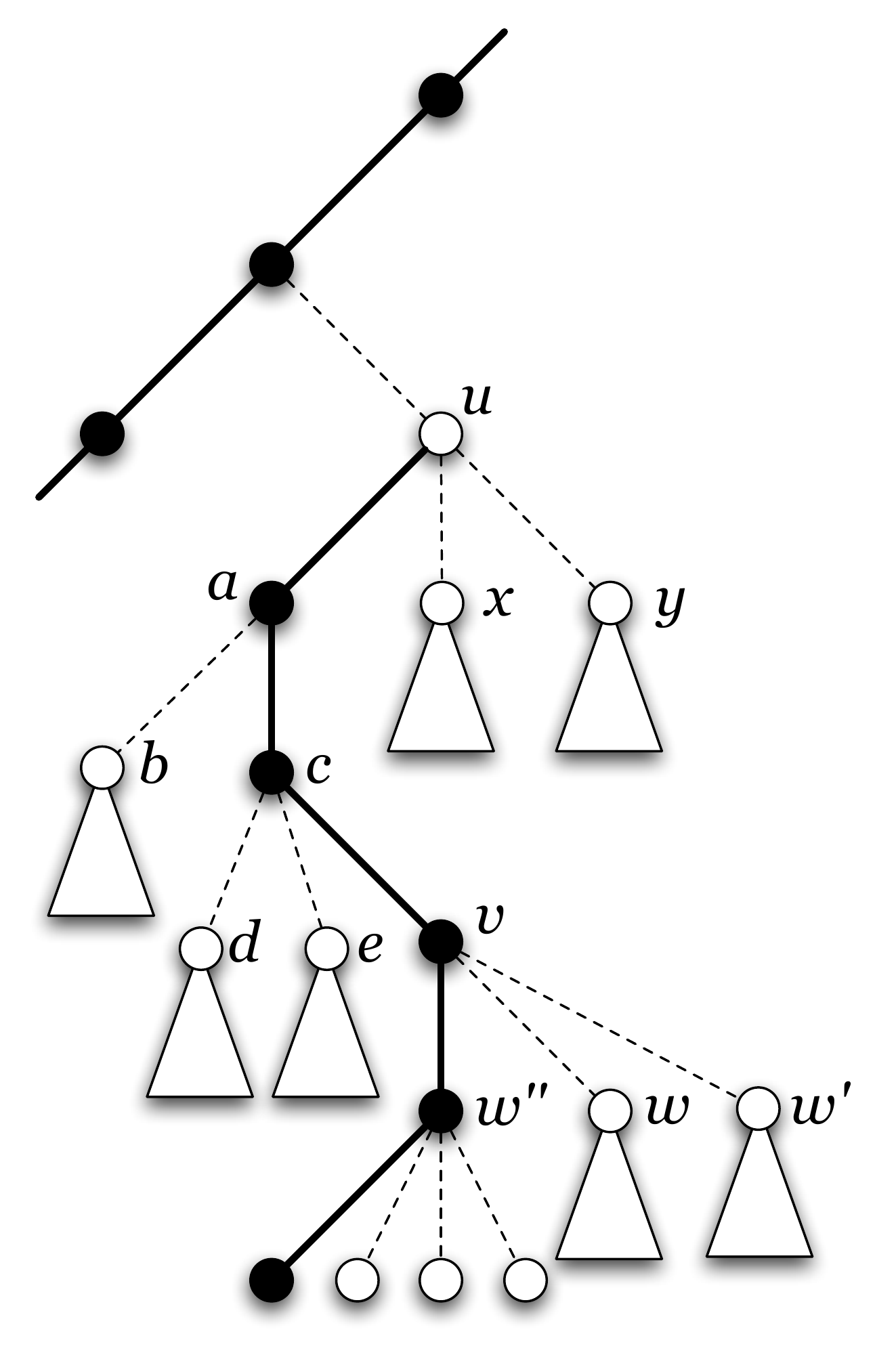}
\caption{The heavy nodes are depicted in black, while the light nodes are depicted in white. In the figure, $\sp(v)=u$ and $\sp(w)=w$. We have $\parent(w)=v$. Hence, the local quasi ancestors of $w$ are the set of nodes is $\lqa(w)=\{x,y,a,b,c,d,e,v,w'\}$ if $\dfs(w)>\dfs(w')$, and
$\lqa(w)=\{x,y,a,b,c,d,e,v\}$, otherwise. Similarly, we have $\lqa(x)=\emptyset$ if $\dfs(y)>\dfs(x)$, and $\lqa(x)=\{y\}$ otherwise. }
\label{fig:suplocheavy}
\end{center}
\end{figure}

For each node $u\in T$, define the {\em supervisor} of $u$, denoted by $\sp(u)$, as the light node of largest depth on the  path $P[u,r]$ connecting
$u$ to the root $r$. Note that if $u$ is light then $\sp(u)$ is $u$ itself; in particular, $\sp(r)=r$, and $\sp(\sp(u))=\sp(u)$. Observe also that if $u$ is an ancestor of $v$, then either $\sp(u)=\sp(v)$, or $\sp(u)$ is an ancestor of $\sp(v)$. See Figure~\ref{fig:suplocheavy}.

As we will show later, the basic rule of our decoder relies on the following definition which is a modification of the interval containment test used in several previous schemes.

\begin{definition}\label{def:decodingcond}

Let us consider a set of intervals $\{I(u),\; u\in V\}$ for a tree $T$. We assume that all intervals in the set are distinct, i.e., $I(u)\neq I(v)$ for any two distinct nodes $u$ and $v$. We say that the {\em decoding} conditions  hold at $u$ w.r.t. $v$ if and only if
\begin{itemize}
\item \D1: $I(v)\subset I(\sp(u))$, and
\item \D2: $I(u)\prec I(v)$ or $I(u)= I(\sp(u))$.
\end{itemize}
\end{definition}

Given the labels $L(u)$ and $L(v)$ of two nodes in a rooted tree,
our boolean decoder $\cD$ outputs~1 if  and only if the decoding conditions hold at $u$ w.r.t. $v$.
Our marker algorithm will then guarantee the following:

\begin{itemize}
\item The decoder is correct, i.e., the intervals associated with each node
are selected such that, for any two nodes $u$ and $v$,  the decoding conditions hold at  $u$ w.r.t. $v$  if and only if $u$ is an ancestor of $v$;
\item given a tree $T$, the interval and labels  can be assigned to all nodes in $T$ in linear time;
\item given a label $L(u)$, the intervals $I(u)$ and $I(\sp(u))$ can be computed in constant time;
\item  each label is encoded using $\log n+O(\log\log n)$ bits.
\end{itemize}

\section{The $\log n+O(\log\log n)$  ancestry labeling scheme}

We are now ready to prove our main result, that is:

\begin{theorem}\label{theo:main}
There is an ancestry labeling scheme for $\cT(n)$ with label size $\lceil\log n\rceil+6\lceil\log\log n\rceil+7$ and constant query time.
Moreover, given a tree $T$, the labels can be assigned to the nodes of $T$ in linear time.
\end{theorem}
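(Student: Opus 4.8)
I would build the scheme directly on top of the heavy--light machinery already set up. The first ingredient is the standard weight argument: if $v$ is a light child of $u$ then $\weight(v)\le \weight(u)/2$, since the heavy child of $u$ has weight at least $\weight(v)$ and $\weight(u)\ge \weight(\text{heavy child})+\weight(v)$. Consequently any path $P[u,r]$ contains at most $\log n$ light nodes, so each node has at most $\log n$ distinct supervisors above it, and $\sp(\cdot)$ organizes the tree into a hierarchy of heavy paths of depth $O(\log n)$, the tops of these paths being exactly the supervisors.

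The second ingredient is the choice of intervals. I would run a DFS that visits the \emph{heavy} child of each node \emph{last}, assigning to each node $u$ a coordinate $\dfs(u)$ in a line $[0,N)$ with $N$ slightly larger than $n$ (see below). For a heavy node $u$ I set $I(u)=[\dfs(u),\dfs(u)]$ (a point), and for a light node $s$ I set $I(s)=J(s)$, the contiguous block of coordinates occupied by the subtree rooted at $s$; since $\sp(u)=u$ for light $u$, this is consistent with $I(\sp(u))$. With this choice the decoding conditions become transparent. Condition \D1, $I(v)\subset I(\sp(u))$, says exactly that $v$'s coordinate lies inside the block $J(\sp(u))$, i.e. that $\sp(u)$ is an ancestor of $v$ (or $\sp(u)=\sp(v)$) --- the ``supervisor granularity'' part of ancestry. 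Condition \D2 refines this \emph{within} a heavy path: because the heavy child is visited last, inside the subtree of $\sp(u)$ a node $v$ is a descendant of a heavy node $u$ if and only if $\dfs(u)<\dfs(v)$, which is precisely $I(u)\prec I(v)$; and when $u$ is light the second disjunct $I(u)=I(\sp(u))$ fires and \D1 alone decides ancestry. I would then check the handful of boundary cases ($u=v$, and $u,v$ on the same heavy path) to confirm that \D1 and \D2 hold together if and only if $u$ is an ancestor of $v$.

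The crux is the \emph{encoding}: a label must let the decoder recover both $I(u)$ and the possibly large interval $I(\sp(u))$ in $\log n+O(\log\log n)$ bits. I would store $\dfs(u)$ explicitly together with three $O(\log\log n)$-bit parameters describing $J(\sp(u))$ relative to $\dfs(u)$: the scale $k=\lceil\log\weight(\sp(u))\rceil$, the index $q$ of the length-$2^{k}$ aligned slot containing $u$ inside $J(\sp(u))$, and the length $m$ of $J(\sp(u))$ measured in such slots. Then the left endpoint of $J(\sp(u))$ is $2^{k}(\lfloor\dfs(u)/2^{k}\rfloor-q)$ and the right endpoint is obtained by adding $m\,2^{k}$, so both intervals are reconstructed by a constant number of shifts and subtractions. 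The design decision that makes all three parameters small is to allocate the blocks so that the start of every $J(s)$ is aligned to a multiple of $2^{\lceil\log\weight(s)\rceil}$ and its length is rounded up to a multiple of the same quantity. I expect this to be the main obstacle, for the following reason: aligning and padding each block to its \emph{own} size would force an $\Omega(1)$ blow-up at each of the $\log n$ levels of the hierarchy and push the universe to $n^{\Omega(1)}$, destroying the $\log n$ budget; aligning only to the \emph{weight} scale $2^{\lceil\log\weight(s)\rceil}$ keeps the universe small but then $J(\sp(u))$ no longer sits inside the single aligned slot of $\dfs(u)$. Storing the slot index $q$ decouples the two requirements, and it is cheap because the total padding is $\sum_{s\ \text{light}}2^{\lceil\log\weight(s)\rceil}\le 2\sum_{s\ \text{light}}\weight(s)\le 2n\log n$ (each node is counted once per light ancestor, of which there are at most $\log n$). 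Hence $N=O(n\log n)$, so $\dfs(u)$ needs $\log n+O(\log\log n)$ bits, while $J(\sp(u))$ spans only $O(\log n)$ aligned slots, so $q$ and $m$ fit in $O(\log\log n)$ bits; the total matches the claimed $\lceil\log n\rceil+6\lceil\log\log n\rceil+7$.

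Finally I would verify the two efficiency claims. A single DFS computes all weights, the heavy/light marks, the supervisors, and --- carried out with a running allocation cursor that pads to the next aligned boundary before opening each new block --- the coordinates $\dfs(u)$ together with the parameters $k,q,m$, all in linear time. For the query, the decoder reconstructs $I(u)$ and $I(\sp(u))$ from $L(u)$ and $I(v)$ from $L(v)$ using only shifts, subtractions and comparisons, and then tests $\subset$ and $\prec$ with a constant number of comparisons, giving constant query time in the assumed RAM model.
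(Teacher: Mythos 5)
Your proposal is correct and, at its combinatorial core, rests on the same ideas as the paper's proof: heavy-light marking, a DFS that defers the heavy child, weight-scale alignment of the light nodes' blocks to powers of two (paying an $O(n\log n)$ universe for the padding), and recovering $I(\sp(u))$ from $\dfs(u)$ by a small slot offset --- your parameter $q$ is exactly the paper's offset $t$ in $a'=a''-t$. Where you genuinely diverge is in how the interval assignment and its correctness are organized. The paper assigns every node, heavy or light, an interval $I_{i,a,b}=[2^ia,2^i(a+b)]$ through a recursive lemma (Lemma~\ref{lemma:main}) with an explicit budget $|J|=4k|T|$, and proves correctness indirectly: it introduces the local partial order conditions \LPO1--\LPO2 and shows (Claims~\ref{claim:lpo1} and~\ref{claim:equiv}) that these local conditions imply the global decoding conditions, so the recursion only has to certify local facts about each subtree's root. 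You instead give heavy nodes degenerate point intervals $[\dfs(u),\dfs(u)]$ and light nodes their padded subtree blocks, and verify \D1--\D2 globally and directly, using the observation that with heavy-child-last DFS the subtree of a heavy node $u$ is exactly the DFS-suffix of the subtree of $\sp(u)$. Your route is shorter and makes the decoder's semantics more transparent; the paper's route localizes the verification, which is what keeps its recursion clean. Two small points you should not leave implicit: the decoder needs one extra bit to know whether $I(u)$ is the point or the block (i.e., whether $u$ is heavy), and the strictness required in \D1 ($I(v)\subset I(\sp(u))$, not $\subseteq$) must be checked for light leaves, whose blocks also degenerate to points --- both are immediate but belong among the ``boundary cases'' you promise to check.
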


We prove Theorem~\ref{theo:main} by constructing
an ancestry labeling scheme $(\cM,\cD)$ with the desired properties.

\subsection{The marker algorithm $\cM$}
For simplicity of presentation assume that $n$ is a power of 2, and let us fix a tree $T\in\cT(n)$.
Our marker algorithm $\cM$ first assigns an interval to each node in a way such that $u$ is an ancestor of $v$ if and only the decoding
conditions hold at $u$ w.r.t. $v$. For this purpose, we first show that it is sufficient to provide an assignment of intervals that satisfies a more ``local" condition.

\subsubsection{The local partial order conditions}

Let us first assign numbers from 0 to $n-1$ to the nodes according to a DFS traversal that starts at the root, and visits light children first. We denote by $\dfs(u)$ the DFS number of $u$.  Let $P_u=P[\parent(u),\sp(\parent(u)]$. We define the \emph{local quasi-ancestors} of $u$, denoted by $\lqa(u)$, as all nodes in $P_u$, together with their light children, but removing $u$, $\sp(\parent(u))$, and all nodes that have DFS numbers higher than $u$. See Figure~\ref{fig:suplocheavy} for an example. Note that the local quasi-ancestors of a node may not form a connected subtree of $T$.

\begin{definition}
Let us consider a set of pairwise distinct intervals $\{I(u),\; u\in V\}$ for a tree $T$.
For every node~$u$, we say that $u$ satisfies the  {\em local partial order (\LPO)} conditions if the two conditions below are satisfied:
\begin{itemize}
\item \LPO1: $I(u)\subseteq I(\sp(u)) \cap I(\sp(\parent(u)))$ for every non root node $u$;
\item \LPO2: $I(x)\prec I(u)$ for every local quasi-ancestor $x\in \lqa(u)$.
\end{itemize}
\end{definition}

\begin{claim}\label{claim:lpo1}
If every node $u$ satisfies the  local partial order condition \LPO1, then for any light node $u$, and any descendent $v$ of $u$, we have $I(v)\subset I(u)$.
\end{claim}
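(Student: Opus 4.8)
The plan is to prove the claim by induction on the distance $d=\dist(u,v)\geq 1$ between the light node $u$ and its descendant $v$, relying only on condition \LPO1 together with the structural facts about supervisors established earlier in the excerpt. The central observation I will exploit is that, since $u$ is light, $\sp(u)=u$, and since $u$ lies on the path $P[v,r]$, the supervisor of any node on $P[v,u]$ has depth at least that of $u$ and therefore lies on the sub-path between that node and $u$. Note also that it suffices to prove the non-strict inclusion $I(v)\subseteq I(u)$: because the intervals are assumed pairwise distinct and $v\neq u$, we automatically upgrade to the strict inclusion $I(v)\subset I(u)$.

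First I would record the inductive engine. Applying \LPO1 to the non-root node $v$ gives $I(v)\subseteq I(\sp(v))\cap I(\sp(\parent(v)))\subseteq I(\sp(\parent(v)))$. Writing $p=\parent(v)$, note that $p$ equals $u$ or is a descendant of $u$, so $p$ lies on $P[v,u]$. Since $u$ is a light node on $P[p,r]$, the deepest light node $\sp(p)$ has depth at least $\dist(r,u)$, and hence $\sp(p)$ lies on $P[p,u]$; moreover $\sp(p)$ is itself light, being a supervisor. This is the key step and the main (mild) obstacle: the supervisor of $\parent(v)$ need not be $u$ itself but may be an intermediate light descendant of $u$, so the argument cannot collapse in one step and must chase these intermediate supervisors upward toward $u$.

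For the \emph{base case} $d=1$ we have $p=\parent(v)=u$, so $\sp(p)=\sp(u)=u$, and the engine yields $I(v)\subseteq I(u)$ directly. For the \emph{inductive step} $d>1$, I split on the position of $\sp(p)$ along $P[p,u]$. If $\sp(p)=u$, then $I(v)\subseteq I(\sp(p))=I(u)$ and we are done. Otherwise $\sp(p)\neq u$; then $\sp(p)$ is a light descendant of $u$ lying strictly between $p$ and $u$, so $1\leq\dist(u,\sp(p))\leq d-1<d$. Since $\sp(p)$ is a light descendant of $u$ at distance strictly less than $d$, the induction hypothesis gives $I(\sp(p))\subseteq I(u)$, and combining with the engine yields $I(v)\subseteq I(\sp(p))\subseteq I(u)$.

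This completes the induction and establishes $I(v)\subseteq I(u)$ for every descendant $v$ of the light node $u$; invoking distinctness of the intervals then gives the strict containment $I(v)\subset I(u)$ asserted in the claim. (Equivalently, one can phrase the same argument as iterating the map $v\mapsto \sp(\parent(v))$, which produces a chain of light nodes of strictly decreasing depth, all at depth at least that of $u$, whose interval inclusions telescope up to $I(u)$; the depths being bounded below forces termination precisely at $u$.)
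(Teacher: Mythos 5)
Your proof is correct and follows essentially the same route as the paper's: induction on $\dist(u,v)$, with the base case handled directly by \LPO1 and the general step reduced to an intermediate light node lying strictly between $u$ and $v$, to which the induction hypothesis is applied. Your version is merely a bit more explicit in identifying that intermediate node as $\sp(\parent(v))$ and in noting that pairwise distinctness of the intervals upgrades $\subseteq$ to $\subset$.
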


\begin{proof}
The claim is established by induction on the distance between $u$ and $v$. If $\dist(u,v)=1$ then the claim holds by \LPO1. Assume that the claim holds for $\dist(u,v)\leq d$ for $d\geq 1$, and assume $\dist(u,v)=d+1$. If $\sp(v)=u$ then the claim follows by \LPO1. Thus assume  $\sp(v)\neq u$. If $\sp(\parent(v))=u$ then again the claim follows by \LPO1. Thus assume also $\sp(\parent(v))\neq u$. In this case there exists a light node $w\notin\{u,v\}$ on the shortest path connecting $u$ to $v$. The claim then follows by induction.
\end{proof}

The following claim relates the decoding conditions to the local partial order conditions.

\begin{claim}\label{claim:equiv}
If every node $u$ satisfies the  local partial order conditions \LPO1 and \LPO2, then for every two different nodes $u$ and $v$, $u$ is an ancestor of $v$ if and only if the decoding conditions \D1 and \D2 hold at $u$ w.r.t. $v$.
\end{claim}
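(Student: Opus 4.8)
The plan is to prove the two implications separately. First I would record a reformulation that is used constantly: since the intervals are pairwise distinct, $I(u)=I(\sp(u))$ holds iff $u=\sp(u)$, i.e.\ iff $u$ is light, so \D2 reads ``$I(u)\prec I(v)$ or $u$ is light''. I would also isolate a \emph{forward lemma}: if $a$ is heavy and $\ell$ is a light proper descendant of $a$, then $I(a)\prec I(\ell)$. I would prove it by induction on the number of light nodes strictly between $a$ and $\ell$. Writing $s'=\sp(\parent(\ell))$, if there is no such light node then the whole segment from $a$ down to $\parent(\ell)$ is heavy, so $s'$ is a proper ancestor of $a$; hence $a$ lies on $P_\ell=P[\parent(\ell),s']$ with $a\neq s'$, $a\neq\ell$ and $\dfs(a)<\dfs(\ell)$, so $a\in\lqa(\ell)$ and \LPO2 gives $I(a)\prec I(\ell)$. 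Otherwise $s'$ is a light node strictly between $a$ and $\ell$; the induction hypothesis gives $I(a)\prec I(s')$, Claim~\ref{claim:lpo1} gives $I(\ell)\subset I(s')$, and since $\max I(a)<\min I(s')\le\min I(\ell)$ we again get $I(a)\prec I(\ell)$.

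For the forward implication, assume $u$ is an ancestor of $v$. Then $\sp(u)$ is light and a proper ancestor of $v$, so Claim~\ref{claim:lpo1} gives $I(v)\subset I(\sp(u))$, which is \D1. For \D2, if $u$ is light we are done, so suppose $u$ is heavy. If $\sp(v)=\sp(u)$ then $v$ is heavy, so $\sp(\parent(v))=\sp(v)$ and $u$ lies on $P_v=P[\parent(v),\sp(v)]$ with $u\neq\sp(\parent(v))$, $u\neq v$ and $\dfs(u)<\dfs(v)$; thus $u\in\lqa(v)$ and \LPO2 gives $I(u)\prec I(v)$. If $\sp(v)\neq\sp(u)$ then, by the observation that $\sp(u)$ is then a proper ancestor of $\sp(v)$ and that the segment from $\sp(u)$ to $u$ is heavy, the node $t=\sp(v)$ is a light proper descendant of $u$; the forward lemma yields $I(u)\prec I(t)$ and Claim~\ref{claim:lpo1} yields $I(v)\subseteq I(t)$, so $I(u)\prec I(v)$.

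For the converse I would argue by contraposition: assuming $u$ is not an ancestor of $v$ (with $u\neq v$), I show \D1 or \D2 fails. Put $s=\sp(u)$. If $v=s$ then $I(v)=I(s)$ contradicts the strict inclusion of \D1. If $v$ is a proper descendant of $s$, then $u$ must be heavy (otherwise $u=s$ would be an ancestor of $v$) and $v$ is not a descendant of $u$; writing the heavy path from $s$ to $u$ as $c_0=s,\dots,c_m=u$ (with $m\ge1$), such a $v$ either equals some $c_i$ with $1\le i\le m-1$, where $c_i\in\lqa(u)$ gives $I(v)\prec I(u)$ by \LPO2, or is a descendant of a light child $x$ of some $c_i$ with $0\le i\le m-1$, where $x\in\lqa(u)$ gives $I(x)\prec I(u)$ and Claim~\ref{claim:lpo1} gives $I(v)\subseteq I(x)$, hence again $I(v)\prec I(u)$; in both cases $I(u)\prec I(v)$ is false and, $u$ being heavy, \D2 fails.

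The remaining case is that $v$ is not a descendant of $s$, where I must show $I(v)\not\subset I(s)$, a converse of Claim~\ref{claim:lpo1} for the light node $s$. When $v$ is a proper ancestor of $s$ this is immediate: if $v$ is light Claim~\ref{claim:lpo1} gives $I(s)\subset I(v)$, and if $v$ is heavy the forward lemma gives $I(v)\prec I(s)$. The genuine obstacle is when $v$ and $s$ are incomparable. Here I plan to reduce to the statement that two incomparable \emph{light} nodes have $\prec$-separated intervals: passing to $t=\sp(v)$ (light, with $I(v)\subseteq I(t)$ by Claim~\ref{claim:lpo1}), it suffices to show $I(t)\prec I(s)$ or $I(s)\prec I(t)$, since either outcome rules out $I(v)\subset I(s)$. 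This separation of incomparable light intervals is exactly where the DFS order (which visits light children first) must be used in full, chaining \LPO2 through the least common ancestor together with Claim~\ref{claim:lpo1} and the forward lemma, and I expect it to need its own induction; this is the main technical hurdle of the argument.
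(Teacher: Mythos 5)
Your forward direction and your ``forward lemma'' are correct and essentially match the paper's argument (the paper compresses your induction into the sentence ``there exists a light node $w$ with $u\in\lqa(w)$''), and your treatment of the converse when $v$ is $\sp(u)$, a descendant of $\sp(u)$, or a proper ancestor of $\sp(u)$ is sound (modulo spelling out the $\dfs$ checks, e.g.\ that a light child $x$ of $c_i$ has $\dfs(x)<\dfs(u)$ because the DFS visits light children before the heavy child). But the last case --- $v$ and $s=\sp(u)$ incomparable --- is a genuine gap, and not only because you leave its induction unfinished: the reduction you propose for it is wrong. You pass to $t=\sp(v)$ and claim it suffices to establish $I(t)\prec I(s)$ or $I(s)\prec I(t)$, treating $t$ and $s$ as two incomparable light nodes. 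But $t$ and $s$ need not be incomparable: $t=\sp(v)$ can be a \emph{proper ancestor} of $s$, in which case Claim~\ref{claim:lpo1} forces $I(s)\subset I(t)$ and neither $\prec$ relation can hold. A concrete instance: let the root $r$ be followed by heavy nodes $h_1,h_2$, take $v=h_2$ (so $t=\sp(v)=r$), and let $u=s$ be a light child of $h_1$. Here $v$ and $s$ are incomparable, yet $I(s)\subset I(r)=I(t)$, so your sufficient condition is unattainable; what actually kills \D1 is $s\in\lqa(v)$ (since $\dfs(s)<\dfs(v)$, light children first), giving $I(s)\prec I(v)$ directly.

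The paper avoids this trap by anchoring the case analysis not at $\sp(u)$ and $\sp(v)$ but at $w$, the deepest light node on $P[v,r]$ that is an ancestor of $u$, and by comparing $f(u)$ and $f(v)$, the \emph{topmost} light nodes on $P[u,w)$ and $P[v,w)$. The point is that the heavy segments from $w$ down to $\parent(f(u))$ and down to $\parent(f(v))$ are both prefixes of the unique heavy chain below $w$ (each node has at most one heavy child), so one of $f(u),f(v)$ is always a light child of a node on the other's path $P_{(\cdot)}$; hence \LPO2 applies to whichever has the smaller $\dfs$ number and yields $I(f(u))\prec I(f(v))$ or $I(f(v))\prec I(f(u))$, which combined with $I(\sp(u))\subseteq I(f(u))$ and $I(v)\subseteq I(f(v))$ refutes \D1 in either order. (The degenerate sub-cases $\sp(u)=w$ or $\sp(v)=w$ are handled separately, as in your $h_2$ example.) To repair your proof you would need to replace the $\sp(u)$-versus-$\sp(v)$ comparison by this comparison of topmost light nodes below the common light ancestor $w$; as written, the claimed reduction fails and the remaining induction you defer is not merely technical but attacks the wrong statement.
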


\begin{proof}
Assume that every node $u$ satisfies the local partial order conditions. Consider first the case that $u$ is an ancestor of $v$.  Since $v$ is a descendent of $u$, either $\sp(v)=\sp(u)$ or $\sp(v)$ is a descendent of $\sp(u)$. Thus, by Claim~\ref{claim:lpo1}, $I(v)\subseteq I(\sp(v)) \subseteq I(\sp(u))$. Since $v\neq u$,   $I(v)\subset I(\sp(u))$, i.e., \D1 follows. If $u$ is light then the fact that $u$ and $v$ satisfy \D2 follows trivially from the fact that, in this case, $\sp(u)=u$. So assume now that $u$ is heavy. If $u\in \lqa(v)$, then \D2 follows from \LPO2. Otherwise, if $u\notin\lqa(v)$, then there exists a light node $w$ that is an ancestor of $v$ and such that $u\in\lqa(w)$. \D2 follows by combining \LPO2 with Claim~\ref{claim:lpo1}.

Consider now the case that $u$ is not an ancestor of $v$. We need to show that either \D1 or \D2 does not hold.
Let $w$ be the light node of largest depth on the path $P[v,r]$ which is an ancestor
of $u$.
If $w$ is $v$ itself then  $\sp(u)$ is either $v$ or a descendant of $v$. Therefore, by Claim \ref{claim:lpo1}, we have
$I(\sp(u))\subseteq I(v)$, and thus \D1 is not satisfied.
In the remaining proof we thus assume that $w\neq v$.

For a node $x$ which is a descendant of $w$, and which satisfies $\sp(x)\neq w$, let $f(x)$ be
the light node of smallest depth on the path  $P[x,w)$.
Assume first that
$\sp(u)=w$. If also
$\sp(v)=w$ then $v\in \lqa(u)$, and thus, by \LPO2, $I(v)\prec I(u)$. Since $u\neq w$, we have  $I(u)\neq I(\sp(u))$, and therefore
\D2 is not satisfied.
If, on the other hand, $\sp(v)\neq w$ then we have $f(v)\in \lqa(u)$ and $I(v)\subset I(f(v))$. Similarly to the previous case, this implies
that \D2 is not satisfied.

Assume now that  $\sp(u)\neq w$.
We have $I(\sp(u))\subseteq I(f(u))$. If $v$ is an ancestor of $f(u)$ then $v\in \lqa(f(u))$. Consequently, $I(v)\prec I(f(u))$
and thus \D1 does not hold. On the other hand, if $v$ is not an ancestor of $f(u)$ then we are left with two cases:
$\sp(v)=w$ and $\sp(v)\neq w$. If $\sp(v)=w$ then $I(f(u))\prec I(v)$ since $f(u)\in \lqa(v)$. Thus
\D1 does not hold. Finally, if $\sp(v)\neq w$ then  either $f(u)\in \lqa(f(v))$ or $f(v)\in \lqa(f(u))$.
Since $I(\sp(u))\subseteq I(f(u))$ and $I(v)\subseteq I(f(v))$, it follows that \D1 does not hold.
\end{proof}

\subsubsection{The interval assignment}

By Claim~\ref{claim:equiv}, one of our goals is to let the marker assign  intervals that satisfy the local partial order conditions at each node.
For integers $a,b$ and $k$, let
$$ I_{k,a,b}=[2^{k}a,\; 2^{k}(a+b)].$$
For $k\in [1,\log n]$, define the set of intervals:
\[
\cI_k=\{I_{i,a,b}\mid i\in [1,k], a\in [1,\frac{4n\log n}{2^{i}}], \mbox{~and~} b \in [1,4\log n] \}.
\]
Let  $\cI=\cI_{\log n}$.

\begin{definition}
Let $T\in\cT(n)$. We say that a mapping $I:V \rightarrow \cI$ is a {\em legal interval-mapping} if
the mapping is one-to-one, and $\{I(u), u\in V\}$ satisfies the local partial order conditions at each node of $T$.
\end{definition}

In order to show that there exists a legal interval-mapping from every tree in $\cT(n)$
into $\cI$, we use the following notation.
For any interval $J\subset [1,n\log n]$, and, for any $k$, $1\leq k \leq \log n$, let
$$\cI_k(J)= \left\{I_{i,a,b}\in \cI_k\mid  I_{i,a,b} \subseteq J\right\}.$$

\begin{lemma}\label{lemma:main}
For every $k\in [1,\log n]$, every tree $T\in\cT(2^{k})$, and every interval $J\subseteq [1,4n\log n]$,
such that $|J|= 4k|T|$,
there exists an legal interval-mapping of $T$ into $\cI_k(J)$. Moreover this mapping can be computed in $O(|T|)$ time.
\end{lemma}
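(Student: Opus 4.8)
I propose to prove Lemma~\ref{lemma:main} by induction on $k$, exploiting the heavy path decomposition rooted at $r$. First I would observe that we may assume $k=\lceil\log|T|\rceil$, the smallest admissible value: a legal mapping into $\cI_{k_0}(J_0)$ for $k_0\le k$ and a subinterval $J_0\subseteq J$ of size $4k_0|T|\le|J|$ is automatically a legal mapping into the larger family $\cI_k(J)$, since the \LPO\ conditions are intrinsic to the chosen intervals. Let $v_0=r,v_1,\dots,v_m$ be the path obtained by repeatedly descending to the heavy child, so that $\sp(v_j)=r$ for all $j$. The starting point is the observation that, although $v_{j+1}$ is a descendant of $v_j$, conditions \LPO1 and \LPO2 do \emph{not} force nesting of the corresponding intervals: since $\sp(v_j)=\sp(\parent(v_{j+1}))=r$, condition \LPO1 only requires each $I(v_j)\subseteq I(r)$, while \LPO2 (because $v_j\in\lqa(v_{j+1})$) requires $I(v_j)\prec I(v_{j+1})$. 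Hence I would give $r$ a single large level-$k$ bounding interval $I(r)$ (the widest interval of $\cI_k$ fitting inside $J$) and place the $I(v_j)$ as small, pairwise disjoint intervals arranged left to right inside $I(r)$. By Claim~\ref{claim:lpo1} every descendant interval will then automatically lie inside $I(r)$, so $I(r)$ serves as the bounding box of the whole construction.

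Next I would fill $I(r)$ by sweeping the off-path children in DFS order (recall the DFS visits light children first). For each $j$, between the markers $I(v_j)$ and $I(v_{j+1})$ I place, one after another, a block for every light child $c$ of $v_j$: the block is an aligned subinterval $J_c$ into which $T_c$ is embedded by applying the lemma recursively with parameter $k_c=\lceil\log|T_c|\rceil$. This is legitimate because $\weight(c)\le\weight(v_j)/2$, whence $|T_c|\le 2^{k-1}$ and $k_c\le k-1$; the recursion thus uses strictly finer levels (endpoints multiples of $2^{k_c}\le 2^{k-1}$), which nest correctly inside the level-$k$ interval $I(r)$. Consecutive markers and blocks are separated by a unit gap so that neighbouring intervals are strictly $\prec$-ordered, and the markers are taken minimal (say level~$1$), since \LPO1 requires nothing to be nested inside a heavy marker.

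The correctness check splits along the two conditions. For \LPO1, every node is either $r$, a heavy marker (all of which lie in $I(r)$), or a node inside some light subtree $T_c$; for the latter, both $\sp$ and $\sp\circ\parent$ take the same values whether computed in $T$ or in the standalone tree $T_c$ (the heavy/light marking and the deepest light ancestor of a node of $T_c$ coincide), so \LPO1 follows from the inductive hypothesis together with $I(c)\subseteq I(r)$, which holds because $J_c\subseteq I(r)$. For \LPO2 the key structural fact is that $\lqa(u)$ never leaves the region of $u$: if $u$ lies in $T_c$ then $P_u$, and hence all of $\lqa(u)$, stays inside $T_c$, and the ordering is handled by the recursion; and if $u=v_{j+1}$ lies on the heavy path then $\lqa(u)$ is exactly $\{v_1,\dots,v_j\}$ together with the light children of $v_0,\dots,v_j$ of smaller DFS number, all of whose intervals were placed to the left of $I(v_{j+1})$ by the sweep, giving $I(x)\prec I(u)$. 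Combined with Claim~\ref{claim:equiv}, this yields a legal interval-mapping.

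The step I expect to be the main obstacle is the budget accounting, namely fitting everything into $|J|=4k|T|$ while simultaneously respecting membership in $\cI_k$ (level at most $k$, width parameter $b\le 4\log n$, and the admissible range of $a$). The recursive blocks consume $\sum_c 4k_c|T_c|\le 4(k-1)\sum_c|T_c|\le 4(k-1)|T|$, so the remaining $4|T|$ must absorb all overhead: the $O(1)$ markers and gaps per heavy-path node, costing $O(m)=O(|T|)$; and the alignment slack incurred when positioning each block so that its level-$k_c$ bounding interval starts at a multiple of $2^{k_c}$, which is at most $2^{k_c}\le 2|T_c|$ per block and hence at most $2\sum_c|T_c|\le 2|T|$ in total, plus the single alignment loss of $I(r)$ itself. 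Checking that the constant $4$ is large enough to cover all of these at once — together with the easy facts that $|I(r)|\approx 4k|T|\le 4\log n\cdot 2^{k}$ forces $b_0\le 4k\le 4\log n$ and that $J\subseteq[1,4n\log n]$ keeps every $a$ in range — is the delicate, constant-chasing heart of the proof. The linear running time is then immediate, as the heavy-light decomposition, the DFS numbering, and the subtree sizes are computed in $O(|T|)$ time and the sweep touches each node a constant number of times.
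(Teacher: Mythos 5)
Your construction is structurally the same as the paper's: decompose $T$ along the heavy path, give the root a large aligned bounding interval, lay out the heavy-path nodes and the light subtrees left to right in DFS order inside it (light children of $v_j$ between $v_j$ and $v_{j+1}$), and recurse on the light subtrees at a strictly smaller level. Your identification of the key structural facts is also correct: \LPO1 does not force nesting along the heavy path because $\sp(v_j)=r$ for all $j$, and $\lqa(u)$ is always confined either to the subtree $T_c$ containing $u$ or to pieces placed to the left of $u$'s piece in the sweep. The genuine gap is that you have explicitly left unproved the one step that carries the entire quantitative content of the lemma, namely that everything fits in $|J|=4k|T|$ with all intervals staying inside $\cI_k$ --- you name it as ``the delicate, constant-chasing heart of the proof'' and bound the overhead only as $O(|T|)$. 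With your particular bookkeeping this is not a routine check: you pay a unit gap per piece, a separate constant-size marker per heavy-path node, an external alignment slack of up to $2^{k_c}\le 2|T_c|$ per block, and up to $2^k<2|T|$ to align $I(r)$, while the slack $4k|T|-\sum_c 4k_c|T_c|$ is only guaranteed to be about $4|T|+4(k-1)d$ when the light subtrees are large enough to force $k_c=k-1$. A naive sum of your overhead terms can exceed that, so closing your version requires a finer analysis (e.g., exploiting that few subtrees can have $k_c=k-1$ simultaneously), not just checking a constant.

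The paper's bookkeeping makes this accounting telescope with no case analysis. Every piece of $T\setminus\{r\}$ --- each light subtree \emph{and} each heavy-path node $v_j$, the latter treated as a singleton tree --- is a tree $T_i$ that receives a budget of exactly $4k|T_i|$ at level $k$ (one below the current level $k+1$), and these budgets are packed contiguously with no gaps; since $\sum_i|T_i|=|T|-1$, the contents consume $4k(|T|-1)<4k|T|\le |J'|$. The only overhead at the current level is the alignment of the root's interval $J'=[a2^{k+1},(a+b)2^{k+1}]$, which costs less than $2\cdot 2^{k+1}=2^{k+2}<4|T|$ and is paid for exactly by the increment $4(k+1)|T|-4k|T|=4|T|$ in the budget between consecutive levels; all deeper alignment costs are paid inside the recursive calls by the same mechanism. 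In particular no unit gaps are needed (consecutive disjoint intervals $J'_i\prec J'_{i+1}$ already give $I(x)\prec I(y)$ for intervals placed inside them), and the heavy markers need no special treatment: a singleton at level $k$ simply costs $4k$. I would restructure your overhead analysis along these lines rather than trying to verify the constants of your version.
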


\begin{proof}
We prove the lemma by induction on $k$.
For $k=1$, the lemma holds trivially.
Assume now that the claim holds for $k$ with $1\leq k< \log n$, and let us show that it also
holds for $k+1$. Clearly, if $|T|\leq 2^{k}$ then we are done by induction.

Consider now the case where $T$ is of size $2^k <|T|\leq 2^{k+1}$, and let $J\subset [1,4n\log n]$ be an interval, such that $|J|=4(k+1)|T|$. Our goal is to show that
there exists a legal-interval mapping of $T$ into $\cI_{k+1}(J)$.

We make use of  the following decomposition of $T$.
Let $H$ be the path from the root of $T$ to a leaf of $T$ such that every non-root node in $H$ is heavy.
 Let $v_1,v_2,\cdots,v_{d}$ be the nodes of  $H$, ordered top-down, i.e.,  $v_1$ is the root of $T$, $v_{d}$ is a leaf of $T$,
 and for every $1\leq i<d$, $v_i$ is the parent of $v_{i+1}$.

For every $1\leq i\leq d$, let $T_i^1,T_i^2,\cdots, T_i^{t_i}$ be the rooted trees hanging down from the light children of $v_i$.
(If $v_i$ does not have any light child, which is the case, for example, for $i=d$, then this set of trees is empty, or, in other words, $t_i=0$).
One important property of these trees is that, for every  $i$ and $j$, $1\leq j \leq t_i$, we have $|T_i^j|< |T|/2\leq 2^{k}$.

\begin{figure}[t]
\begin{center}
\includegraphics[width=0.5\linewidth]{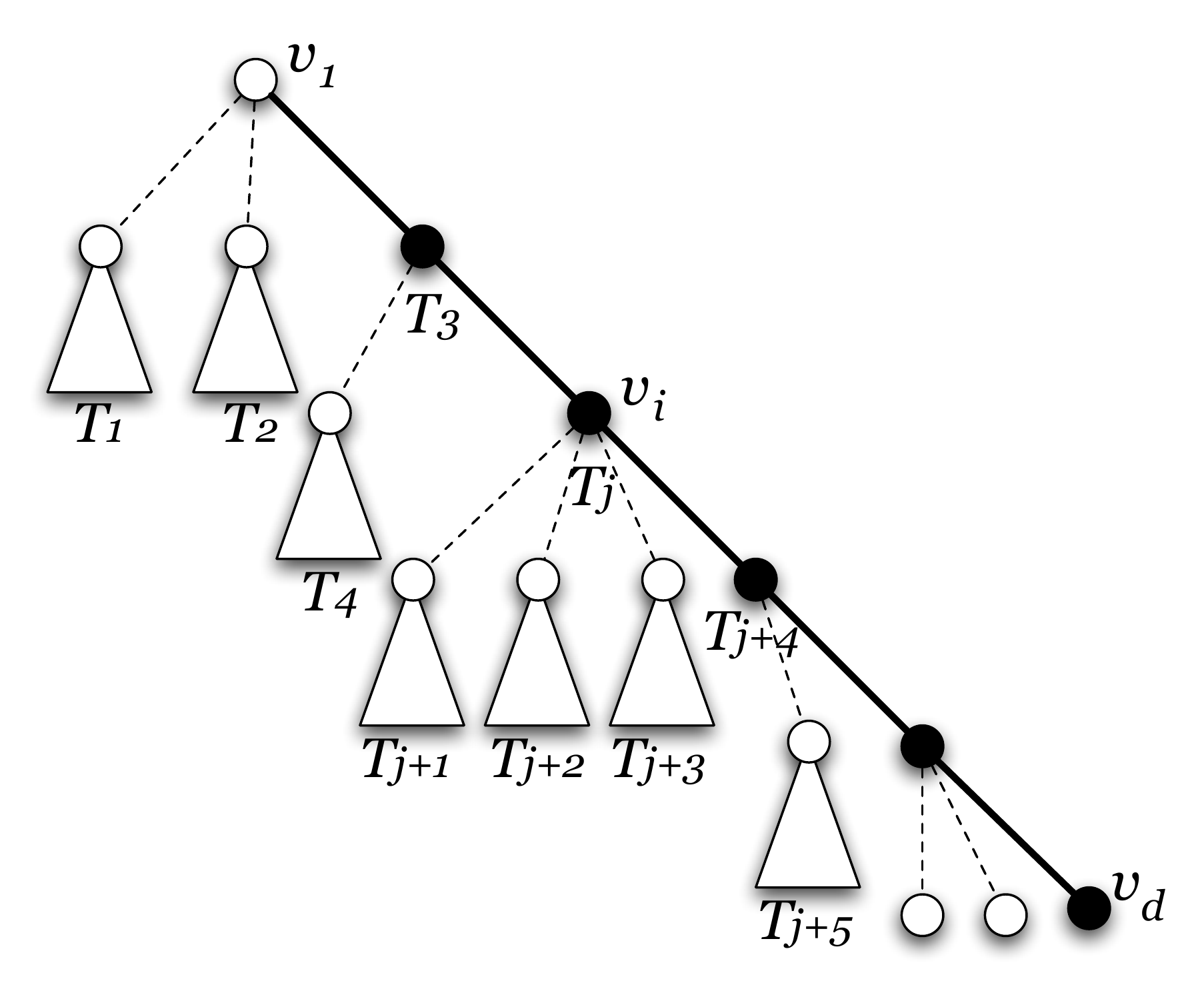}
\caption{Tree decomposition as in the proof of Lemma~\ref{lemma:main}. In the figure, the DFS traversal that visits light nodes first is supposed to proceed by visiting children from left to right.}
\label{fig:recur}
\end{center}
\end{figure}

We now group the nodes in $T\setminus\{r\}$ in disjoint trees $T_1,T_2,\cdots, T_m$, where $m=(d-1)+\sum_{\ell=1}^d t_\ell$ as follows. A tree $T_i$ is either a single heavy node $v_j$, for $j>1$, or a subtree hanging from a light child of some $v_j$, $j\geq 1$. Moreover, the trees are enumerated according to the DFS numbers of their roots as follows. Recall $\dfs(u)$ denotes the DFS number of node $u$ in $T$. The trees are ordered such that if $r_j$ denotes the root of $T_j$ then $\dfs(r_j)<\dfs(r_{j+1})$ for all $j=1,\dots,m-1$. See Figure~\ref{fig:recur}.

Consider now the interval $J\subseteq [1,4n\log n]$ such that $|J|=4(k+1)|T|$, and
 express it as $J=[\alpha, \alpha+4(k+1)|T|-1]$ for some integer $\alpha\leq 4n\log n$.
Let $a$ be the smallest integer such that $\alpha \leq a\, 2^{k+1}$, and
let $b$ be the smallest integer such that $ 4k|T|\leq b\, 2^{k+1}$.

First, we assign the root $r$ to the interval $J'=[a2^{k+1},(a+b)2^{k+1}]$. We now show that indeed $J'\in \cI_{k+1}(J)$. By definition of $a$ and $b$, we have
\[(a+b)2^{k+1}=2^{k+2}+((a-1)+ (b-1))2^{k+1}\leq 2^{k+2}+\alpha+4k|T|-2. \]
Since $2^k<|T|$, we get  that
\[(a+b)2^{k+1}< \alpha+(4k+4)|T|-1. \]
Thus
\[J' = [a\, 2^k,\,(a+b)2^{k+1}]\subset [\alpha,\alpha+(4k+4)|T|-1]= J. \]
Therefore,
since $1\leq a \leq \alpha/2^{k+1}\leq 4n\log n/2^{k+1}$, and $1\leq b \leq 4\log n$, we obtain that
$J'\in \cI_{k+1}(J)$.

The rest of the nodes in $T$ are mapped as follows. First note that $|J'|\geq 4k|T|$, and recall that $\sum_{i=1}^m |T_i|=|T|-1$.
We break $J'$ into $m+1$ consecutive intervals $J'_1,J'_2, \cdots J'_{m+1}$ such that,
for each $1\leq i \leq m$, we have $|J'_i|=4k|T_i|$ and $J'_i\prec J'_{i+1}$.
For each $1\leq i \leq m$, since $|T_i|\leq 2^k$, we can use the induction hypothesis to map the nodes in $T_i$ to $\cI_k(J'_i)$ via
a legal interval-mapping.

The fact that the above recursive mapping can be performed in linear time is obvious.
It remains to show that the above mapping of $T$ into $\cI_{k+1}(J)$ is indeed a legal interval-mapping. That is, we have to show that the set of intervals satisfies the local partial order conditions \LPO1 and \LPO2 at each node $u$. The conditions hold trivially at the root $r$ of $T$.  The fact that the conditions hold for every node $u$ in $T_i\setminus r_i$, follows from the induction hypothesis, and because both $\sp(u)$, $\sp(\parent(u))$, and $\lqa(u)$ are all contained in $T_i$. Finally, consider the root $r_i$ of $T_i$. (Note that if $r_i$ is heavy then $T_i=\{r_i\}$).  \LPO1 holds trivially for $r_i$ because $J'$ contains $J_i$, and, by induction, the interval assigned to $r_i$ is contained in the interval $J'_i \subset J'$. To establish that \LPO2 holds, first observe that $\lqa(r_i)=\{r_1,\dots,r_{i-1}\}$. On the other hand, for every $j=1,\dots,i-1$,  $\dfs(r_j)<\dfs(r_i)$, and thus $J'_j \prec J'_i$. Hence \LPO2 holds for $r_i$ as well. Our mapping is thus
a legal interval-mapping of $T$ into $\cI_{k+1}(J)$. This completes the proof of the lemma.
\end{proof}

By taking $k=\log n$ in the above lemma, we obtain the following.

\begin{corollary}\label{corollary}
Let $T\in\cT(n)$.
There exists a legal interval-mapping of $T$ into $\cI_{\log n}([1,4n\log n])\subset \cI$.
\end{corollary}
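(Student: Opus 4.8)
The plan is to obtain Corollary~\ref{corollary} as an almost immediate consequence of Lemma~\ref{lemma:main}, so the real work has already been done in the lemma. First I would instantiate the lemma at $k=\log n$. Since every tree $T\in\cT(n)$ has size $|T|\le n=2^{\log n}$, such a tree indeed belongs to $\cT(2^{k})$ with $k=\log n$, so the hypothesis of the lemma on the tree is met.

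Next I would choose the interval $J$ appearing in the lemma. The lemma requires an interval $J\subseteq[1,4n\log n]$ with $|J|=4k|T|$; taking $k=\log n$ this is $|J|=4|T|\log n$. Because $|T|\le n$, we have $4|T|\log n\le 4n\log n$, so an interval of this size fits inside $[1,4n\log n]$; I would simply set $J=[1,4|T|\log n]$ (or any such interval) and observe $J\subseteq[1,4n\log n]$. Feeding this $J$ and $k=\log n$ into Lemma~\ref{lemma:main} yields a legal interval-mapping of $T$ into $\cI_k(J)=\cI_{\log n}(J)$.

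Finally I would record the two containments that turn the lemma's conclusion into the stated corollary. Since $J\subseteq[1,4n\log n]$, the monotonicity of $\cI_k(\cdot)$ in its interval argument (immediate from its definition $\cI_k(J)=\{I_{i,a,b}\in\cI_k\mid I_{i,a,b}\subseteq J\}$) gives $\cI_{\log n}(J)\subseteq\cI_{\log n}([1,4n\log n])$; and by the definition $\cI=\cI_{\log n}$ we trivially have $\cI_{\log n}([1,4n\log n])\subseteq\cI$. Composing the interval-mapping with these inclusions produces a legal interval-mapping of $T$ into $\cI_{\log n}([1,4n\log n])\subset\cI$, as claimed.

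I do not expect any genuine obstacle here, since the corollary is a specialization of the lemma; the only points requiring a line of care are the two size bookkeeping checks, namely that $|T|\le 2^{\log n}$ so that $T\in\cT(2^{\log n})$, and that $4|T|\log n\le 4n\log n$ so that a valid $J$ exists inside $[1,4n\log n]$. Both follow directly from $|T|\le n$.
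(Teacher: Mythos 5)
Your proposal is correct and matches the paper exactly: the paper derives the corollary simply by the remark ``By taking $k=\log n$ in the above lemma, we obtain the following,'' which is precisely your instantiation of Lemma~\ref{lemma:main} with $k=\log n$ and a suitable $J\subseteq[1,4n\log n]$ of size $4|T|\log n$. The extra bookkeeping you spell out (that $T\in\cT(2^{\log n})$ and that $\cI_{\log n}(J)\subseteq\cI_{\log n}([1,4n\log n])\subseteq\cI$) is exactly what the paper leaves implicit.
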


\subsubsection{The label assignment}

We are now ready to describe the label $L(u)$ assigned to every node $u$ by our marker algorithm $\cM$.
Given a rooted tree $T\in \cT(n)$, the marker $\cM$  first marks each node as either  heavy or light, and then assigns the $\dfs$
numbers. This clearly takes linear time. Then  the marker maps the nodes of $T$ into $\cI$ using the legal-interval mapping
 given in Corollary~\ref{corollary}. Again, this step takes linear time.

Given a node $u$, the marker uses the first $\log n +3\lceil\log\log n\rceil+3$ least significant bits of $L(u)$ to encode the interval $I(u)$.
 This can be done explicitly as $I(u)$ is of the form $I_{i,a,b}$ for some $i\in [1,\log n]$,   $a\in [1,\frac{4n\log n}{2^{i}}]$ and $b\in [1,4\log n]$.

Finally, the marker aims at encoding $I(\sp(u))$ in the label of $u$. However, using the method above to encode $I(\sp(u))$ would consume yet another $\log n +3\lceil\log\log n\rceil+3$ bits, which is obviously undesired. Instead, we use the following trick.
Let $i',a'$, and $b'$ be such that
$I(\sp(u))=I_{i',a',b'}$. (Note that if $\sup(u)=u$ then we simply have $i'=i$, $a'=a$ and $b'=b$). Clearly, $2\lceil\log\log n\rceil +2$ bits suffice to encode both  $i'$ and $b'$.
To encode $a'$, the marker acts as follows. Let $I(u)=[\alpha,\beta]$, and let $a''$ be the largest integer such that $2^{i'} a''\leq \alpha$. 
Recall that by definition $I(\sp(u))=[2^{i'}a',2^{i'}(a'+b')]$. We have, $a''-4\log n\leq a''-b'$ because $b'\leq 4\log n$. Since $I(u)\subseteq I(\sp(u))$, we also have $2^{i'}(a'+b')\geq \beta > \alpha \geq 2^{i'} a''$. Thus $a''-b'<a'$. Finally, again since $I(u)\subseteq I(\sp(u))$, we have $2^{i'}a'\leq \alpha$, and thus $a''\geq a'$. Combining the above inequalities, we get that $a'\in[a''-4\log n-1,a'']$. The marker now encodes the integer $t\in [0,4\log n-1]$ such that $a'=a''-t$. This is done in consuming another $\lceil\log\log n\rceil+2$ bits.
Hence, the following follows by construction:

\begin{lemma}
Given a tree $T\in\cT(n)$, the marker $\cM$ assigns labels to the nodes of $T$ in linear time, 
and each label is encoded using $\log n+6\lceil\log\log n\rceil+7$ bits.
\end{lemma}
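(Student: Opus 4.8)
The plan is to verify the two asserted properties—linear running time and label size $\log n + 6\lceil\log\log n\rceil + 7$—by simply auditing each stage of the construction already laid out. The lemma is really a bookkeeping summary, so rather than proving anything substantively new I would organize the proof as a step-by-step accounting of time and bits, leaning on the results established earlier in the section.

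For the running time, I would track the marker's four phases in order. First, marking each node heavy or light requires computing $\weight(u)$ for all $u$, which a single postorder traversal accomplishes in $O(n)$ time; selecting a heaviest child at each internal node is then a local comparison. Second, the DFS numbering that visits light children before heavy ones is one traversal, hence $O(n)$. Third, the interval-mapping is computed in $O(|T|) = O(n)$ time by Lemma~\ref{lemma:main} (invoked through Corollary~\ref{corollary}). Fourth, assembling the label of each node from $I(u)$ and $I(\sp(u))$ is a constant amount of work per node once the intervals are known, giving $O(n)$ overall. Summing the four phases yields the claimed linear bound.

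For the label size, I would add up the bits consumed by the three encoded pieces exactly as the construction prescribes. The interval $I(u) = I_{i,a,b}$ is written explicitly in $\log n + 3\lceil\log\log n\rceil + 3$ bits, since $i\in[1,\log n]$ costs $\lceil\log\log n\rceil$ bits, $b\in[1,4\log n]$ costs $\lceil\log\log n\rceil+2$ bits, and $a\in[1,\frac{4n\log n}{2^i}]$ costs at most $\lceil\log(4n\log n)\rceil \le \log n + \lceil\log\log n\rceil + 2$ bits. Next, the supervisor's parameters $i'$ and $b'$ together cost $2\lceil\log\log n\rceil + 2$ bits. Finally, the offset $t\in[0,4\log n-1]$ that recovers $a'$ from $a''$ via $a' = a'' - t$ costs $\lceil\log\log n\rceil + 2$ bits; here I would cite the earlier computation showing $a'\in[a''-4\log n-1,a'']$, so that a single value in a range of size $4\log n$ determines $a'$ exactly and the decoder can reconstruct $I(\sp(u))$ from $I(u)$. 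Adding the three contributions gives $(\log n + 3\lceil\log\log n\rceil + 3) + (2\lceil\log\log n\rceil + 2) + (\lceil\log\log n\rceil + 2) = \log n + 6\lceil\log\log n\rceil + 7$, as required.

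I do not anticipate a genuine obstacle here, since every nontrivial claim—correctness of the mapping, linearity of Lemma~\ref{lemma:main}, and the arithmetic bounds on $a'$—has already been established in the preceding text. The only point demanding care is the supervisor-encoding trick: I must make sure the bit budget for $t$ is genuinely $\lceil\log\log n\rceil+2$ (since $4\log n - 1$ fits in $\log(4\log n) = \log\log n + 2$ bits) and that the decoder can indeed recompute $a'' $, and thence $a'$, in constant time from $I(u)$ and $i'$ alone—a left-shift and a comparison suffice, consistent with the RAM-model operations the paper permits. With that verified, the three summands combine to the stated bound and the lemma follows by construction.
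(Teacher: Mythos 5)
Your proposal is correct and matches the paper's treatment, which likewise just observes that the time and bit counts ``follow by construction'' from the preceding description of the marker. One small arithmetic point: your sub-accounting for $I(u)$ bounds $a$ by $4n\log n$, which gives $\log n+3\lceil\log\log n\rceil+4$ bits rather than the claimed $+3$; to recover the paper's constant you should use $i\geq 1$, so that $a\leq 4n\log n/2^{i}\leq 2n\log n$ and $a$ costs only $\log n+\lceil\log\log n\rceil+1$ bits.
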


\subsection{The decoder $\cD$}

Now, we describe our decoder $\cD$. Given the labels $L(u)$ and $L(v)$ assigned by $\cM$ to two different nodes in some tree $T$, the decoder $\cD$ needs to find whether $u$ is an ancestor of $v$ in $T$. (Observe that since each node receives a distinct label, the decoder can easily find out
if $u$ and $v$ are in fact the same node, and, in this trivial case, it simply outputs 0.)

The decoder inspects the first $\log n+3\lceil\log\log n\rceil+3$ least significant bits of $L(u)$ to extract  $I(u)$ (recall that $I(u)= I_{i,a,b}$ is encoded by storing explicitly the three parameters $i$, $a$, and $b$). Then,
once $I(u)=[\alpha,\beta]$ has been reconstructed from $L(u)$, the decoder
aims at extracting $I(\sp(u))$. For this purpose, it first reconstructs $i'$ and $b'$ that have been explicitly encoded in the next $2\lceil\log\log n\rceil+2$ bits.
Then, it
computes the largest integer $a''$ such that $2^{i'} a''\leq \alpha$.
 The decoder then proceeds by extracting $t$, and computes $a'=a''-t$. At this point, $\cD$ have reconstructed both $I(u)$ and $I(\sp(u))$.
Similarly, $\cD$ extracts $I(v)$ by inspecting $L(v)$.

Finally, the boolean decoder $\cD$ outputs~1 if and only if the two decoding conditions \D1 and \D2
hold at $u$ w.r.t. $v$  (see  Definition~\ref{def:decodingcond}).

\begin{lemma}
Let $L(u)$ and $L(v)$ be two labels assigned by $\cM$ to two nodes in $T$.
The decoder $\cD(L(u),L(v))$ performs in constant time, and satisfies $\cD(L(u),L(v))=1$ if and only if $u$ is an ancestor of $v$ in $T$.
\end{lemma}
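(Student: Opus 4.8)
The plan is to prove the final lemma in two parts: correctness of the decoder, and constant query time. For correctness, nearly all the work has already been done. The marker $\cM$ assigns intervals via the legal interval-mapping guaranteed by Corollary~\ref{corollary}, so the set $\{I(u),\;u\in V\}$ satisfies the local partial order conditions \LPO1 and \LPO2 at every node, and the intervals are pairwise distinct. Claim~\ref{claim:equiv} then states exactly that, under these hypotheses, $u$ is an ancestor of $v$ if and only if the decoding conditions \D1 and \D2 hold at $u$ w.r.t.\ $v$. Since $\cD$ outputs $1$ precisely when \D1 and \D2 hold, correctness is immediate \emph{provided} the decoder actually reconstructs the correct intervals $I(u)$, $I(\sp(u))$, and $I(v)$ from the labels.

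Hence the first substantive step is to verify that the reconstruction procedure is faithful to the encoding procedure. I would walk through the decoder's extraction in lockstep with the marker's encoding: the first $\log n+3\lceil\log\log n\rceil+3$ bits explicitly store $i$, $a$, $b$, so $I(u)=I_{i,a,b}=[\alpha,\beta]$ is recovered verbatim; the next $2\lceil\log\log n\rceil+2$ bits explicitly store $i'$ and $b'$; and the final $\lceil\log\log n\rceil+2$ bits store the offset $t\in[0,4\log n-1]$. The only nontrivial point is that $a'$ is recovered correctly. Here I would invoke the key invariant established in the label-assignment subsection: the decoder computes $a''$ as the largest integer with $2^{i'}a''\le\alpha$ — the \emph{same} quantity the marker computed from $\alpha$ — and then sets $a'=a''-t$. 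Because $\alpha$ is itself reconstructed exactly, $a''$ is computed identically on both ends, and the analysis preceding the lemma already shows $a'\in[a''-4\log n-1,a'']$, so the stored offset $t$ determines $a'$ unambiguously. Thus $I(\sp(u))=I_{i',a',b'}$ is reconstructed exactly, and the decoder feeds the true intervals into the test of Definition~\ref{def:decodingcond}.

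For the running time, I would argue that every operation is a constant number of word-level RAM operations on $\Theta(\log n)$-bit words. Extracting the explicitly stored fields ($i,a,b,i',b',t$) amounts to right shifts and masking with fixed-width bit fields, which take constant time under the stated RAM model. Computing the largest $a''$ with $2^{i'}a''\le\alpha$ is a single right shift of $\alpha$ by $i'$ bits, and forming $a'=a''-t$ and $I(\sp(u))=[2^{i'}a',2^{i'}(a'+b')]$ uses only shifts, additions, and subtractions — exactly the repertoire of fast operations the Preliminaries permit, avoiding multiplication and division. Finally, evaluating \D1 ($I(v)\subset I(\sp(u))$) and \D2 ($I(u)\prec I(v)$ or $I(u)=I(\sp(u))$) reduces to a constant number of less-than comparisons and equality tests on interval endpoints. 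Assembling these observations gives the constant query time, completing the lemma and, together with the preceding label-size lemma, establishing Theorem~\ref{theo:main}.

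The step I expect to be the main obstacle is the faithful-reconstruction argument for $a'$: one must be careful that the marker and decoder compute $a''$ from the identical value of $\alpha$ (so any rounding in the definition of $a''$ cancels), and that the range $[a''-4\log n-1,a'']$ containing $a'$ genuinely fits within the $[0,4\log n-1]$ budget allotted to the offset $t$. Everything else is a routine matching of encode/decode steps and a tally of constant-time RAM primitives.
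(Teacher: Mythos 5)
Your proposal is correct and follows essentially the same route as the paper: correctness is delegated to the legal interval-mapping of Corollary~\ref{corollary} together with Claim~\ref{claim:equiv}, and constant query time is argued by matching the decoder's field extractions and shift/subtraction/comparison operations against the permitted RAM primitives. You simply spell out the reconstruction of $a'$ (and hence $I(\sp(u))$) in more detail than the paper's own terse proof, which defers that analysis to the label-assignment subsection.
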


\begin{proof}
The fact that $\cD(L(u),L(v))=1$ if and only if  $u$ is an ancestor of $v$ in $T$ follows from the fact that the intervals are assigned by the marker via a legal-interval mapping (cf. Corollary~\ref{corollary}). Since $I(u)= I_{i,a,b}=[2^ia,2^i(a+b)]$ with all three parameters $i$, $a$, and $b$ stored explicitly, computing $I(u)$ from $L(u)$ can be achieved in constant time. (Note that $2^ia$, for example, can be obtained from $a$ but a simple shift of $i$ bits.)
Similarly, $I(v)$ can be extracted from $L(v)$ in constant time. Computing $I(\sp(u))$ just needs a simple substraction and a division by a power of~2, which again amounts to a simple shift operation. The lemma follows.
\end{proof}

This completes the proof of Theorem~\ref{theo:main}.

\section{Conclusion}

Our ancestry labeling scheme is using labels of optimal size $\log_2n+O(\log\log n)$ bits, to the price of a decoding mechanism based of an interval condition slightly more complex than the simple interval containment condition. Although this has no impact on the decoding time (our decoder still works in constant time), the question of whether there exists an ancestry labeling scheme with labels of size $\log_2n+O(\log\log n)$ bits, but using solely the  interval containment condition, is intriguing.

\paragraph{Acknowledgments:} the authors are very thankful to Sundar Vishwanathan  and Jean-Sebastien Sereni for helpful discussions.




\begin{thebibliography}{99}

\bibitem{AAKMT01}
S.~Abiteboul, S.~Alstrup,  H.~Kaplan, T.~Milo  and T.~Rauhe.
\newblock  Compact labeling schemes for ancestor queries.
\newblock \emph{SIAM Journal on Computing }{\bf 35}, (2006), 1295--1309.


 \bibitem{AKM01}
Abiteboul, S., Kaplan, H., and Milo, T.:
\newblock Compact labeling schemes for ancestor queries.
\newblock In Proc. 12th ACM-SIAM Symp. on Discrete Algorithms  (SODA), 2001.

\bibitem{ABR05}
S. Alstrup, P. Bille and T. Rauhe.
\newblock Labeling Schemes for Small Distances in Trees.
\newblock In Proc. 14th ACM-SIAM Symp. on Discrete Algorithms  (SODA), 2003.

\bibitem{AGKR01}
S.~Alstrup, C.~Gavoille, H.~Kaplan and T.~Rauhe.
\newblock Nearest Common Ancestors: A Survey and a new Distributed Algorithm.
\newblock {\em Theory of Computing Systems }{\bf 37}, (2004), 441--456.

\bibitem{AR02+}
S. Alstrup and T. Rauhe.
\newblock Small induced-universal graphs and compact implicit graph representations.
\newblock In Proc. 43rd IEEE Symp. on Foundations of Computer Science (FOCS), 2002.

\bibitem{ARSODA02}
S. Alstrup and T. Rauhe.
\newblock Improved labeling scheme for ancestor queries.
\newblock In  Proc. 13th ACM-SIAM Symposium on Discrete Algorithms (SODA), pages 947-953, 2002.





\bibitem{FG01}
P.~Fraigniaud and C.~Gavoille.
\newblock Routing in trees.
\newblock In Proc. 28th Int. Colloq. on Automata, Languages \& Prog. (ICALP), LNCS 2076, pages 757--772, Springer, 2001.

\bibitem{FK09}
P.~Fraigniaud and A.~Korman.
\newblock Compact Ancestry Labeling Schemes for XML Trees.
\newblock In Proc. 21st ACM-SIAM Symp. on Discrete Algorithms (SODA), 2010.

\bibitem{GP01b}
C.~Gavoille and D.~Peleg.
\newblock Compact and Localized Distributed Data Structures.
\newblock {\em Distributed Computing }{\bf 16}, (2003), 111--120.

\bibitem{GPPR01}
C.~Gavoille, D.~Peleg, S.~P\'erennes and R.~Raz.
\newblock Distance labeling in graphs.
\newblock In Proc. 12th ACM-SIAM Symp. on Discrete Algorithms  (SODA), pages 210--219, 2001.


\bibitem{KKKP04}
M.~Katz, N.A.~Katz, A.~Korman, and D.~Peleg.
\newblock Labeling schemes for flow and connectivity.
\newblock {\em SIAM Journal on Computing} {\bf 34} (2004),23--40.

\bibitem{K07}
A.~Korman.
\newblock  Labeling Schemes for Vertex Connectivity.
\newblock {\em ACM Transactions on Algorithms}, to appear.

\bibitem{K08}
A.~Korman.
\newblock  Improved Compact Routing Schemes for Dynamic Trees
\newblock In Proc. 27th ACM Symp. on Principles of Distributed Computing (PODC), 2008.

\bibitem{KK06}
A.~Korman and S.~Kutten.
\newblock Distributed Verification of Minimum Spanning Trees.
\newblock {\em Distributed Computing} {\bf 20}(4): 253-266 (2007).


\bibitem{KMS02}
H.~Kaplan, T.~Milo and R.~Shabo.
\newblock A Comparison of Labeling Schemes for Ancestor Queries.
\newblock In Proc. 19th ACM-SIAM Symp. on Discrete Algorithms (SODA), 2002.

\bibitem{KNR92}
S.~Kannan, M.~Naor, and S.~Rudich.
\newblock Implicit representation of graphs.
\newblock {\em SIAM J. on Discrete Math }{\bf 5}, (1992), 596--603.




\bibitem{Peleg00:lca}
D.~Peleg.
\newblock Informative labeling schemes for graphs.
\newblock In Proc. 25th Symp. on Mathematical Foundations of Computer Science (MFCS), LNCS 1893, pages 579--588. Springer, 2000.


\bibitem{SK85}
N.~Santoro and R.~Khatib.
\newblock Labelling and implicit routing in networks.
\newblock {\em The Computer Journal }{\bf 28}, (1985), 5--8.

\bibitem{T01}
M.~Thorup.
\newblock Compact oracles for reachability and approximate distances in planar digraphs.
\newblock {\em J. of the ACM }{\bf 51}, (2004), 993--1024.

\bibitem{TZ01}
M.~Thorup and U.~Zwick.
\newblock Compact routing schemes.
\newblock In Proc. 13th ACM Symp. on Parallel Algorithms and Architecture (SPAA), pages 1--10, 2001.





\end{thebibliography}
\end{document}